\newtheorem{theorem}{Theorem}
\newtheorem{lemm}{Lemma}
\newtheorem{corollary}{Corollary}
\newtheorem{rem}{Remark}
\newtheorem{Pro}{Proposition}
\newcommand{\E}[1]{\mathbb{E}[#1]}
\newcommand{\nn}{\nonumber\\}
\newcommand{\Dbar}{\bar{D}}
\begin{document}

\title{AoI in M/G/1/1 Queues with Probabilistic Preemption
%MGF of the (Peak) AoI in  M/G/1/1 Queueing Systems with Probabilistic Preemption
}
 \author{
 \IEEEauthorblockN{Mohammad~Moltafet\textsuperscript{*}, Hamid R. Sadjadpour\textsuperscript{*}, Zouheir~Rezki\textsuperscript{*}, Marian~Codreanu\textsuperscript{†}, and Roy~D.~Yates\textsuperscript{††}
 \\
\textsuperscript{*}Department of ECE, University of California Santa Cruz, USA 
(\{mmoltafe, hamid, zrezki\}@ucsc.edu)
\\
\textsuperscript{†}Department of Science and Technology, Link\"{o}ping University, Sweden (marian.codreanu@liu.se)
\\
\textsuperscript{††}Department of ECE, Rutgers University, USA (ryates@winlab.rutgers.edu)
}
 %\thanks{Mohammad Moltafet,  Hamid R. Sadjadpour, and Zouheir Rezki are with Department of Electrical and Computer Engineering, University of California Santa Cruz, Santa Cruz, USA, (e-mail: \{mmoltafe, hamid, zrezki\}@ucsc.edu),  Marian Codreanu is with the Department of Science and Technology, Link\"{o}ping University, Link\"{o}ping, Sweden (e-mail: marian.codreanu@liu.se), and Roy D. Yates is with the Wireless Information Network Laboratory, Department of Electrical and Computer Engineering, Rutgers University, North Brunswick, USA (e-mail: ryates@winlab.rutgers.edu).
%}
 }

\maketitle
\begin{abstract}
We consider a status update system consisting of one source, one server, and one sink. The source generates packets according to a Poisson process and the packets are served according to a generally distributed service time. We consider a system with a capacity of one packet, i.e., there is no waiting buffer in the system, and model it as an M/G/1/1 queueing system. We introduce a probabilistically preemptive packet management policy and calculate the moment generating functions (MGFs) of the age of information (AoI) and peak AoI (PAoI) under the policy. According to the probabilistically preemptive policy, when a packet arrives, the possible packet in the system is replaced by the arriving packet with a fixed probability. Numerical results show the effectiveness of the packet management policy.

%\emph{Index Terms--}  Status update systems, AoI, packet management, moment generating function (MGF).
\end{abstract}	
%\vspace{-5mm}
\section{Introduction}\label{Introduction}
Ensuring the freshness of status updates for various real-world physical processes is a crucial component of many cyber-physical system applications, such as smart factories and autonomous vehicles. In these applications, stale status updates hold no value, even if they are successfully delivered to the destination or decision maker. In \cite{6195689}, the age of information (AoI)  
 was introduced to measure the information freshness at the destination in status update systems. The peak AoI (PAoI) was later introduced in \cite{6875100} as an alternative metric for evaluating information freshness. 
%A status update packet contains the measured value of a monitored process and a time stamp representing the time at which the sample was generated. Due to wireless channel access, channel errors, fading, etc. communicating a status update packet through the network experiences a random delay. If at a time instant $t$, the most recently received status update packet contains the time stamp $U(t)$, AoI is defined as the random process $\Delta(t)=t-U(t)$.
The AoI measures the difference between the current time and the time stamp of the last received sample from the monitored process. 
%A status update system is usually modeled as a queueing system. 

\subsection{Contributions}
%In this paper, w
We consider a single-source status update system with one server and no waiting buffer. Packets are generated according to a Poisson process and have  %where each packet is 
%served according to a 
with iid  general  service time distribution. We derive the moment generating functions (MGFs) of the AoI and PAoI under a probabilistically preemptive policy. According to the policy, when a packet arrives, the possible packet in the system is preempted by the arriving packet with a fixed probability. From the MGFs, we derive the average AoI and PAoI expressions. 
%The results show that the proposed probabilistically preemptive packet management policy can significantly improve the system's performance. More specifically, 

Our results show that using either a preemptive policy (an arriving packet always preempts a packet in service) or a non-preemptive policy (an arriving packet is blocked and cleared when the server is busy) can be quite inefficient in terms of AoI performance. When a preemption mechanism is available, the system must learn when to apply it. Moreover, finding the optimal value for the probability of preemption can significantly enhance timeliness.
%the system's performance.

\subsection{Related work}
The first queueing-theoretic study on AoI was presented in \cite{6195689}, where the average AoI for M/M/1, D/M/1, and M/D/1 first-come first-served (FCFS) queueing models was derived. 
As demonstrated in \cite{6310931,7415972}, implementing appropriate packet management policies in status update systems—either in the waiting queue, the server, or both—can significantly enhance information freshness.
The performance of various packet management policies in queueing systems with exponentially distributed service times and Poisson arrivals has been extensively analyzed in the literature \cite{8469047,8437591,8406966,8437907,9013935,9048914,9252168,9162681,Moltafet2020mgf,9611498,9705518}.

Besides exponentially distributed service time and Poisson arrivals, AoI has been studied under various arrival processes and service time distributions. 
 The distribution of the AoI and PAoI for the single-source PH/PH/1/1 and M/PH/1/2 queueing models were derived in \cite{9119460}.
The average AoI of a single-source D/G/1 FCFS  queueing model was derived in \cite{8406909}. A closed-form expression of the average AoI for a single-source  M/G/1/1 preemptive queueing model with hybrid automatic repeat request was derived in \cite{8006504}. The distributions of the AoI and PAoI of single-source  M/G/1/1 FCFS and G/M/1/1 FCFS queueing models were derived in \cite{8006592}.
A general formula for the distribution of the AoI in single-source single-server queueing systems was derived in \cite{8820073}. 
The average AoI and PAoI of a single-source status update system with Poisson arrivals and a service time with gamma distribution under the last-come first-served (LCFS) policy were analyzed in \cite{7541764}. 
The average AoI of a single-source G/G/1/1 queueing model was studied in \cite{9048933}. 
The distribution of the AoI in a generate-at-will single-source dual-server system was derived in \cite{Arxakar2024age}. The authors considered that the servers serve packets according to exponentially distributed service times where the sampling and transmission process is frozen for a period of time with  Erlang distribution upon each transmission.

The average AoI and PAoI in a multi-source M/G/1 FCFS queueing model were studied in    
\cite{9099557,inoue2024exact}.
The average AoI of a queueing system with two classes of Poisson arrivals with different priorities under a general service time distribution was studied in \cite{8886357}. 
The average AoI and PAoI of a multi-source M/G/1/1 queueing model under the globally preemptive packet management policy were derived in \cite{8406928}.  According to the globally preemptive packet management policy, a new arriving packet preempts the possible under-service packet independent of the source index of the packets.
The average AoI and PAoI of a multi-source M/G/1/1 queueing model under the non-preemptive policy were derived in \cite{9500775}. According to the non-preemptive policy, when the server is busy, any arriving packet, independent of their source index, is blocked and cleared.  
 The authors of \cite{9519697} considered a multi-source system with Poisson arrivals where the server serves packets according to a phase-type distribution. 
 %Using the theory of Markov fluid queues, they proposed a method to 
 They numerically obtain the distributions of the AoI and PAoI under a probabilistically preemptive policy. 
 %According to the policy, a new packet arriving from source $c$ can preempt a packet from source $c'$ in service with a probability depending on $c$ and $c'$.
The MGFs of the AoI and PAoI of a multi-source M/G/1/1 queueing model under the self-preemptive policy were derived in \cite{9869867}. According to the self-preemptive policy, a new arriving packet preempts a possible under-service packet only if they have the same source index. In addition, the authors of \cite{9869867} derived the MGFs of the AoI and PAoI of the models studied in \cite{9500775} and \cite{9869867}.
The distributions of the AoI and PAoI for a generate-at-will multi-source system with a phase-type service time distribution were derived in \cite{10139823}.
The Laplace-Stieltjes transform of the AoI for a two-source system with Poisson arrivals and a generally distributed service time was derived in \cite{10038591}. The authors assumed that there is a buffer for each source and studied three versions of the self-preemptive policy.

%\subsection{Organization}
% In Section~\ref{System Model and Summary of Results}, the system model and summary of the main results are presented. Derivation of the MGFs of the AoI and PAoI is presented in Section~\ref{Calculation of the MGF of the AoI under the packet management policies}. 
%In Section~\ref{Numerical Results}, numerical results are presented. Finally, the paper is concluded in Section~\ref{Conclusions}.

%%%%%%%%%%%%%%%%%%%%
\section{System Model and Main Results}\label{System Model and Summary of Results}
We consider a status update system consisting of one source, one server, and one sink.
The source generates status update packets about a random process and sends them to the sink.  Each status update packet contains the measured value of the monitored process and a time stamp representing the time when the sample was generated. We assume that packets are generated by a rate $\lambda$ Poisson process and have 
%generally distributed 
iid service times with PDF $f_U(\cdot)$ and moment generating function $M_{U} (s) = \mathbb{E}[e^{sU}]$. %{\color{blue} Can we use a different (dummy) variable than $s$ (since we are using $s$ in the transformed (Laplace) domain)?} \RY{a cdot is the solution for now, but maybe we should change $S$ to something else?} 
%\blue{MM: Right, any suggestions? how about $\mu$ (although it is not common to use $\mu$ for service time)?}
%with rate $\mu$. More specifically, we assume that the service times of packets are independent and identically distributed (i.i.d.) random variables following a general distribution. 
Finally, we restrict our attention to systems with capacity one (i.e., there is no waiting buffer); thus, the considered setup is a single-source M/G/1/1 queue. Next, we explain the packet management policy. 
% \begin{figure}
% \centering
% \includegraphics[width=.8\linewidth,trim = 0mm 0mm 0mm 0mm,clip]{Model.pdf}
% %\vspace{-3mm}
% \caption{The considered status update system.}  
% \label{Model}
% %\vspace{-10mm}		
% \end{figure}

\smallskip

\noindent\textbf{Probabilistically Preemptive Policy:} %According to the policy, 
When the server is idle, an arriving packet
%arrives, the fresh packet goes 
immediately goes into service. When the server is busy, an arriving packet preempts the packet in service 
%n the system is preempted by the fresh packet 
with the probability $\theta$; otherwise, the arriving packet is discarded.

\subsection{AoI and PAoI Definition}
The AoI at the sink is defined as the time elapsed since the last successfully received packet was generated. Formally, let $t_{i}$ denote the time instant at which the $i$th delivered status update packet was generated, and let $t'_{i}$ denote the time instant at which this packet arrives at the sink. 
Hereafter, we refer to the $i$th delivered packet simply as ``packet $i$.''
Thus, at time $\tau$, the index of the most recently received packet is given by
$N(\tau)=\max\{i'\mid t'_{i'}\le \tau\},$
and the time stamp of the most recently received packet is $\xi(\tau)=t_{N(\tau)}.$
The AoI at the sink is defined as the random process
${\delta(t)=t-\xi(t).}$

Let the random variable
%\begin{equation}\label{mn120}
$Y_{i}=t'_{i}-t'_{i-1}$
%\end{equation}
denote the $i$th interdeparture time, i.e., the time elapsed between the departures of (delivered) packets  $i-1$ and $i$.  
%\RY{I think it is more common for this to be $Y_{i+1}$ since it is independent of what happens to the $i$th packet.} \blue{MM: right, I revised according to this, i.e., $Y_i\rightarrow Y_{i+1}$, and $A_{i}= Y_{i-1}+T_{i-1}\rightarrow A_{i}= Y_{i}+T_{i-1}$ }  
%\RY{I fiddled with the definition of $Y_i$ because we were calling $Y_{i+1}$ the $i$th interdeparture time, which was wrong. However, I think the definition $A_i$ stays the same?} \blue{MM: Yes, $A_i$ stays the same.}
Moreover, let the random variable
%\begin{equation}\label{mn1201}
$T_{i}=t'_{i}-t_{i}$
%\end{equation}
denote the system time of packet $i$, i.e., the time duration that $i$th delivered packet spends in the system. 

%One of the most commonly used metrics for evaluating the AoI is the PAoI \cite{6875100}. 
 Another commonly used freshness metric is the PAoI \cite{6875100}, defined as the value of the AoI immediately before receiving an update packet. Accordingly, the PAoI associated with packet $i$, denoted by $A_{i}$, 
is given as
%
%\begin{align}\label{AoI.eq}
$A_{i}= Y_{i}+T_{i-1}.$
%
%\end{align}

We assume that the considered status update system is stationary and the AoI process
%{\color{blue} \st{for each source}} 
is ergodic. Thus, we have $T_{i}=^{\mathrm{st}}T$,  $Y_{i}=^{\mathrm{st}}Y$, and $A_{i}=^{\mathrm{st}}A,\forall i$, where ${=^{\mathrm{st}}}$ means stochastically identical (i.e., they have an identical marginal distribution).
We present the main results of our paper in the following theorem.

\begin{theorem}\label{T_source-aware}
 The MGFs of the AoI and PAoI for the M/G/1/1 queueing model under the probabilistically preemptive packet management policy, denoted by $M_{\delta}(s)$ and $M_{A}(s)$, respectively, are given as
\begin{align}
&{M_{\delta}(s)}=\dfrac{M_U(s-\theta\lambda)(M_{Y}(s)-1)}{sM_{U}(-\theta\lambda)\bar{Y}},\\&
M_{A}(s)=\dfrac{M_U(s-\theta\lambda)M_{Y}(s)}{M_{U}(-\theta\lambda)},
\end{align}
where $M_{Y}(s)$ is the MGF of the interdeparture time $Y$, which is given as
\begin{align}
M_{Y}(s)=\dfrac{\lambda(\theta\lambda-s)M_U(s-\theta\lambda)}{(\lambda-s)(\theta\lambda M_U(s-\theta\lambda)-s)},
\end{align}  
 %$M_{S}(s-\theta\lambda)=\mathbb{E}[e^{(s-\lambda\theta)S}]$ is the  MGF of the service time $S$ at $s-\theta\lambda$, 
 and $\bar{Y}$ is mean interdeparture time.
 %, and  $L_{\theta\lambda}=\mathbb{E}[e^{-\lambda\theta S}]$. 
 %{\color{blue} isn't $L_{\theta\lambda}= M_{S}(-\theta\lambda)$? in fact, I think we just need to define once $M_{S} (s) = \mathbb{E}[e^{sS}]$ and all the rest can be expressed in terms of $M_{S} (s)$, as we would use  properties of Laplace transform.}
 %
% \blue{MM: Right, I will present them in terms of $M_U(s)$.}
\end{theorem}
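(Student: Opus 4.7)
The plan is to build the result in four pieces: the MGF of the system time $T$, the MGF of the interdeparture time $Y$, the independence of $T_{i-1}$ and $Y_i$ for consecutive deliveries, and the standard sawtooth-integration identity connecting the AoI to these two MGFs.

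First, I would compute $M_T(s)$. A packet admitted to the server is successfully delivered only if no preempting arrival occurs during its service. By Poisson thinning, preempting arrivals form a Poisson process of rate $\theta\lambda$ independent of the raw service time, so given $U=u$ the packet is delivered with probability $e^{-\theta\lambda u}$. Bayes' rule then yields the conditional density $f_T(t)=f_U(t)e^{-\theta\lambda t}/M_U(-\theta\lambda)$, whence $M_T(s)=M_U(s-\theta\lambda)/M_U(-\theta\lambda)$.

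Next, I would decompose $Y=W+S$, where $W\sim\mathrm{Exp}(\lambda)$ is the idle interval until the first arrival after a departure and $S$ is the busy interval from when that arrival enters service until the next successful delivery. The number $K$ of preempted attempts before a successful one is geometric with success parameter $p=M_U(-\theta\lambda)$; conditional on being preempted, an attempt has length $V$, distributed as the conditional law of $V\sim\mathrm{Exp}(\theta\lambda)$ given $V<U$, whose MGF follows from a short joint-density integral. Summing the independent preempted attempts with the final successful service $T$ via the geometric-sum MGF formula yields
\begin{equation*}
M_{S}(s)=\frac{(s-\theta\lambda)\,M_U(s-\theta\lambda)}{s-\theta\lambda M_U(s-\theta\lambda)},
\end{equation*}
and multiplying by $M_W(s)=\lambda/(\lambda-s)$ produces the claimed $M_Y(s)$ after a sign rearrangement.

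For the PAoI I would use $A_i=T_{i-1}+Y_i$ and argue that $T_{i-1}$ and $Y_i$ are independent: everything determining $Y_i$ (the post-departure idle period and the subsequent preemption/completion epochs) happens strictly after $t'_{i-1}$, and by the memorylessness of the Poisson arrival stream and the iid service times it is independent of the history up to $t'_{i-1}$, in particular of $T_{i-1}$. Therefore $M_A(s)=M_T(s)\,M_Y(s)$, matching the stated formula. For the AoI, the sawtooth sample path $\delta(t)=T_{i-1}+(t-t'_{i-1})$ on $[t'_{i-1},t'_i)$ gives
\begin{equation*}
M_\delta(s)=\frac{\mathbb{E}\!\left[\int_0^{Y}e^{s(T+u)}\,du\right]}{\bar{Y}}=\frac{M_T(s)\bigl(M_Y(s)-1\bigr)}{s\bar{Y}},
\end{equation*}
by the same independence, which coincides with the stated $M_\delta(s)$.

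The main obstacle is the derivation of $M_Y(s)$: it requires recognizing the hidden geometric structure of the preempted service attempts, computing the MGF of a single preempted attempt correctly (conditioned on preemption), and then algebraically simplifying the geometric sum into the compact rational form appearing in the theorem. The conditional law of $T$, the post-departure independence, and the sawtooth identity are comparatively routine once $M_Y(s)$ is in hand.
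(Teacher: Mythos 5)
Your proof is correct and follows essentially the same route as the paper: the Bayes/thinned-Poisson computation of $M_T(s)$, the decomposition of the interdeparture time into an idle $\mathrm{Exp}(\lambda)$ interval plus a geometric number of preempted attempts plus one successful service (which the paper organizes as a two-state semi-Markov chain with sojourn times $\tilde\eta,\bar\eta,\eta$), and the sawtooth/renewal identities relating $M_\delta$ and $M_A$ to $M_T$ and $M_Y$. The only difference is presentational: you derive the sawtooth formulas and the independence of $T_{i-1}$ and $Y_i$ inline, whereas the paper offloads that step to a cited lemma.
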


\begin{rem}
The MGF of the (peak) AoI under the probabilistically preemptive policy, presented in Theorem~\ref{T_source-aware}, generalizes the existing results in \cite{9869867} for a single-source system. Specifically, by letting $\theta\rightarrow 0$, the MGF of the (peak) AoI becomes equal to the MGF of (peak) AoI for a single-source system under the non-preemptive policy derived in \cite{9869867}.   Moreover, by letting $\theta\rightarrow 1$, the MGF of the (peak) AoI becomes equal to the MGF of (peak) AoI for a single-source system under the preemptive policy derived in \cite{9869867}.
\end{rem}

\begin{rem}\label{rem1MGFage}
The $m$th moment of the (peak) AoI is derived by evaluating the $m$th derivative of the MGF of the (peak) AoI at $s=0$. 
\end{rem}

\begin{corollary}\label{agemg11theoremblock}
The average AoI and  PAoI of the M/G/1/1 queueing model under the probabilistically preemptive packet management  policy, denoted by $\Delta$ and $\bar{A}$, respectively, are given as
\begin{align}
&\Delta\!=\!\dfrac{\!M_{U}(\!-\theta\lambda)((\theta^2\!-\!\theta)(M_{U}(-\theta\lambda)\!+\!\!\lambda M'_{U} (-\lambda\theta))\!+\!\theta\!-\!1)\!+\!1}{\lambda M_{U}(-\theta\lambda)^2(\theta^2-\theta)+\lambda M_{U}(-\theta\lambda) \theta},
\\&
\bar{A}=\dfrac{M_{U}(-\theta\lambda)(\theta-1)+\lambda\theta M'_{U} (-\lambda\theta)+1}{\theta\lambda M_{U}(-\theta\lambda)},
\end{align}
where $M'_{U} (-\lambda\theta)=\mathbb{E}[Ue^{-\lambda\theta U}]$. 
%
%{\color{blue} I guess this can be also related to the derivative of $M_{S} (s)$ ... perhaps $M'_{S} (-\lambda\theta)$?}
%\blue{MM: Right: $L'_{\theta\lambda}=M'_{S} (-\lambda\theta)$}
%isn't $L_{\theta\lambda}= M_{S}(-\theta\lambda)$? in fact, I think we just need to define once $M_{S} (s) = \mathbb{E}[e^{sS}]$ and all the rest can be expressed in terms of $M_{S} (s)$, as we would use  properties of Laplace transform.}.
\end{corollary}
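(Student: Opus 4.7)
The plan is to apply Remark~\ref{rem1MGFage} and obtain $\bar{A}$ and $\Delta$ as the first derivatives at $s=0$ of $M_A(s)$ and $M_\delta(s)$ from Theorem~\ref{T_source-aware}. Since both expressions contain $M_Y(s)$, the natural first step is to compute $\bar{Y}=M_Y'(0)$ in closed form. Writing $M_Y(s)=N(s)/D(s)$ with $N(s)=\lambda(\theta\lambda-s)M_U(s-\theta\lambda)$ and $D(s)=(\lambda-s)(\theta\lambda M_U(s-\theta\lambda)-s)$, one checks $N(0)=D(0)=\theta\lambda^2 M_U(-\theta\lambda)$, so the quotient rule collapses to $\bar{Y}=(N'(0)-D'(0))/D(0)$. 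Substituting and grouping terms should yield
\[
\bar{Y}=\frac{1-(1-\theta)M_U(-\theta\lambda)}{\theta\lambda M_U(-\theta\lambda)}.
\]

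For the PAoI, I would differentiate $M_A(s)=M_U(s-\theta\lambda)M_Y(s)/M_U(-\theta\lambda)$ by the product rule and evaluate at $s=0$ using $M_Y(0)=1$ and $M_Y'(0)=\bar{Y}$. This gives the clean identity
\[
\bar{A}=\frac{M'_U(-\theta\lambda)}{M_U(-\theta\lambda)}+\bar{Y},
\]
into which I would insert the closed form of $\bar{Y}$ and combine over a common denominator to reach the expression stated in the corollary.

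For the AoI, the factor $s$ in the denominator of $M_\delta(s)$ forces a small-$s$ expansion. Setting $h(s)=M_U(s-\theta\lambda)(M_Y(s)-1)$, one has $h(0)=0$ and $h'(0)=M_U(-\theta\lambda)\bar{Y}$, so
\[
M_\delta(s)=1+\frac{h''(0)}{2M_U(-\theta\lambda)\bar{Y}}\,s+O(s^2),
\]
which identifies $\Delta=h''(0)/(2M_U(-\theta\lambda)\bar{Y})$. Expanding $h''(0)$ by the product rule gives
\[
\Delta=\frac{M'_U(-\theta\lambda)}{M_U(-\theta\lambda)}+\frac{\mathbb{E}[Y^{2}]}{2\bar{Y}},
\]
with $\mathbb{E}[Y^{2}]=M_Y''(0)$.

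The main obstacle is computing $\mathbb{E}[Y^{2}]$: a direct second derivative of $M_Y(s)=N(s)/D(s)$ introduces $M''_U(-\theta\lambda)$ at intermediate stages, yet the announced $\Delta$ involves only $M_U(-\theta\lambda)$ and $M'_U(-\theta\lambda)$, so these higher-order terms must cancel. I would handle this cleanly by expanding $N(s)$ and $D(s)$ to second order in $s$ about the common value $\theta\lambda^{2}M_U(-\theta\lambda)$, dividing the two truncated polynomials, and reading off the coefficient of $s^{2}/2$; the $M''_U$ contributions should cancel between numerator and denominator expansions, leaving $\mathbb{E}[Y^{2}]$ expressible purely in terms of $M_U(-\theta\lambda)$ and $M'_U(-\theta\lambda)$. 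Substituting back and simplifying algebraically should then reproduce the stated formula for $\Delta$.
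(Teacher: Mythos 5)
Your proposal is correct and takes exactly the paper's (implicit) approach: the corollary is proved simply by invoking Remark~\ref{rem1MGFage} and differentiating the MGFs of Theorem~\ref{T_source-aware} at $s=0$. Your intermediate identities $\bar{Y}=\frac{1-(1-\theta)M_U(-\theta\lambda)}{\theta\lambda M_U(-\theta\lambda)}$, $\bar{A}=\frac{M_U'(-\theta\lambda)}{M_U(-\theta\lambda)}+\bar{Y}$, and $\Delta=\frac{M_U'(-\theta\lambda)}{M_U(-\theta\lambda)}+\frac{\mathbb{E}[Y^2]}{2\bar{Y}}$ all check out, and the $M_U''(-\theta\lambda)$ contributions to $N''(0)$ and $D''(0)$ are indeed identical (both equal $\theta\lambda^2 M_U''(-\theta\lambda)$) and cancel as you predicted.
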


\begin{rem}
The average AoI under the probabilistically preemptive policy, presented in Corollary~\ref{agemg11theoremblock}, generalizes the existing results in \cite{8006504}. Specifically, by letting $\theta\rightarrow 0$, the average AoI becomes equal to the average AoI for an M/G/1/1 system under the non-preemptive policy derived in \cite[Theorem~1]{8006504}. Moreover, by letting $\theta\rightarrow 1$, the average AoI becomes equal to the average AoI for an M/G/1/1 system under the preemptive policy derived in \cite[Theorem~5]{8006504}.
\end{rem}

%%%%%%%%%%%%%%%%%%%%%%%%%%%%%%%%%%%%%%%%%%%%%%%
\section{Derivation of the MGF of the (Peak) AoI }\label{Calculation of the MGF of the AoI under the packet management policies}
To prove Theorem \ref{T_source-aware}, we first provide Lemma~\ref{lemmsmgfage} which describes the MGFs of the  AoI and PAoI.
%RY as a function of the MGFs of the system time of a packet and MGF of the interdeparture time.  
\begin{lemm}\label{lemmsmgfage}
The MGFs of the AoI $\delta$ and PAoI $A$ in an M/G/1/1 queue under the probabilistically preemptive policy %denoted by ${M}_{\delta}(s)$ and ${M}_{A}(s)$, respectively, 
can be characterized as
\begin{align}\label{MGFofagegeneral}
M_{\delta}(s)&=\dfrac{M_{T}(s)(M_{Y}(s)-1)}{s\bar{Y}},\\\label{MGFpeak}
M_{A}(s)&=M_{T}(s)M_{Y}(s), 
\end{align}
where $M_{T}(s)$ is the  MGF of the system time $T$.
%of a delivered packet and $M_{Y}(s)$ is the  MGF of the interdeparture time.
\end{lemm}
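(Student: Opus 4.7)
The plan is to exploit the sawtooth structure of the AoI sample path together with the Palm/ergodic identity that writes a time average as a cycle expectation normalized by the expected cycle length. First I would fix the ergodic setting: between two consecutive successful deliveries at epochs $t'_{i-1}$ and $t'_i$, the AoI grows linearly, starting from $\delta(t'_{i-1})=T_{i-1}$ and ending just before $t'_i$ at $\delta(t'_i{-})=T_{i-1}+Y_i=A_i$. Thus over a generic cycle of length $Y_i$, we have $\delta(t'_{i-1}+v)=T_{i-1}+v$ for $v\in[0,Y_i)$.

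Next I would apply the standard renewal/Palm identity for stationary ergodic processes to write
\begin{align*}
M_\delta(s)=\mathbb{E}[e^{s\delta}]=\frac{\mathbb{E}\!\left[\int_0^{Y} e^{s(T+v)}\,dv\right]}{\bar{Y}}=\frac{\mathbb{E}\!\left[e^{sT}\,\frac{e^{sY}-1}{s}\right]}{\bar{Y}},
\end{align*}
where $T$ stands for $T_{i-1}$ and $Y$ for $Y_i$ in a typical cycle. The PAoI identity $M_A(s)=\mathbb{E}[e^{s(T_{i-1}+Y_i)}]$ follows directly from $A_i=Y_i+T_{i-1}$.

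The crux is then the independence of $T_{i-1}$ and $Y_i$, which reduces both expectations to products of marginal MGFs and yields \eqref{MGFofagegeneral} and \eqref{MGFpeak}. For this I would argue from the regenerative structure of the M/G/1/1 queue: immediately after the departure epoch $t'_{i-1}$ the system is empty, so by the memoryless property of the Poisson arrival stream the residual time until the next arrival is exponential and independent of the past; since service times are i.i.d.\ and the probabilistic preemption decisions are i.i.d.\ Bernoulli($\theta$) draws independent of everything else, the entire trajectory generating $Y_i$ (arrivals, services, and preemption coins after $t'_{i-1}$) is independent of $T_{i-1}$, which is measurable with respect to the history up to $t'_{i-1}$.

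The main obstacle I anticipate is making the independence argument airtight rather than the algebra, because $T_{i-1}$ can depend on a non-trivial chain of preempted packets within the busy period that delivered packet $i-1$; one must carefully note that all of this information is encoded in the pre-$t'_{i-1}$ history and is therefore independent of the post-$t'_{i-1}$ dynamics that define $Y_i$. Once this is established, performing the inner time-integral and simplifying gives the two claimed formulas immediately.
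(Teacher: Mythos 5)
Your proof is correct and follows essentially the same Palm/renewal-reward argument that the paper delegates to \cite[Lemma~1]{9869867}: write the stationary MGF of the sawtooth AoI as a cycle average $\mathbb{E}\bigl[\int_0^{Y}e^{s(T+v)}\,dv\bigr]/\bar{Y}$, evaluate the inner integral, and factor using the independence of $T_{i-1}$ and $Y_i$. You also correctly pin down the one non-trivial ingredient—the independence of $T_{i-1}$ and $Y_i$—via the regenerative structure of the M/G/1/1 queue: each delivery empties the system, so the memoryless Poisson arrivals, iid service times, and iid Bernoulli($\theta$) preemption decisions make the post-departure dynamics (hence $Y_i$) independent of the pre-departure history (hence of $T_{i-1}$).
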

\begin{proof} 
The proof follows similar steps as in \cite[Lemma~1]{9869867}.
\end{proof}

As shown in Lemma \ref{lemmsmgfage}, the main challenge in calculating the MGF  of the (peak) AoI reduces to deriving the  MGFs of the system time, $M_{T}(s)$, and interdeparture time, $M_{Y}(s)$. Note that when we have $M_{Y}(s)$, we can easily derive $\bar{Y}$, i.e., $
\bar{Y}=\mathrm{d}(M_{Y}(s))/{\mathrm{d}s}|_{s=0}
$.
Next, we calculate $M_{T}(s)$ and $M_{Y}(s)$ in Propositions \ref{Pro1} and \ref{Pro2}, respectively. 
%\RY{I'm in the middle of simplifying the proof of Prop 1. I see my suggestion for $U$ a=has been adopted so I will start using it.}
\begin{Pro}\label{Pro1}
The MGF of the system time $T$
%, $M_{T}(s)$,
is 
\begin{align}\label{mgfsystemtime}
M_{T}(s)=\dfrac{M_U(s-\theta\lambda)}{M_{U}(-\theta\lambda)}.
 \end{align}
\end{Pro}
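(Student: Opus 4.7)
The plan is to characterize $T$ as the service time of a packet that is actually delivered, i.e., a service time that completes before any preempting arrival occurs. First I would fix attention on an individual packet that is just entering service and let $U$ denote its service requirement, drawn from $f_U$. While this packet is in service, fresh arrivals occur according to the original Poisson process of rate $\lambda$, and each one independently preempts (with probability $\theta$) or is discarded (with probability $1-\theta$). By the independent thinning property of Poisson processes, the preempting arrivals themselves form a Poisson process of rate $\theta\lambda$, independently of $U$. Hence the time $V$ until the first preempting arrival (measured from the start of service) is exponential with rate $\theta\lambda$ and independent of $U$.

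Next I would observe that a packet is delivered precisely when its service finishes before any preempting arrival, i.e., when $U < V$, and that in this case its system time equals exactly $U$. Therefore the system time of a delivered packet has the conditional distribution of $U$ given $U<V$, and its MGF is
\begin{align*}
M_T(s) \;=\; \mathbb{E}\!\left[e^{sU}\mid U<V\right] \;=\; \frac{\mathbb{E}\!\left[e^{sU}\,\mathbf{1}(U<V)\right]}{\Pr(U<V)}.
\end{align*}

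Conditioning on $U=u$ and using $\Pr(V>u)=e^{-\theta\lambda u}$, the denominator becomes $\Pr(U<V)=\mathbb{E}[e^{-\theta\lambda U}]=M_U(-\theta\lambda)$, and the numerator becomes $\mathbb{E}[e^{sU}e^{-\theta\lambda U}]=M_U(s-\theta\lambda)$. Assembling these two pieces yields the claimed expression $M_T(s)=M_U(s-\theta\lambda)/M_U(-\theta\lambda)$.

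The only delicate point, and where I would spend a sentence of care, is the justification that restricted to the service interval of a tagged (eventually delivered) packet the preempting arrivals form a rate-$\theta\lambda$ Poisson process independent of $U$. This is where PASTA and the memoryless/independent-thinning properties of the Poisson process are invoked; once that is granted, the remainder is a short computation of a conditional MGF, so no further obstacles are anticipated.
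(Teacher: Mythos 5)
Your proposal is correct and follows essentially the same route as the paper: identify $T$ with the service time $U$ conditioned on the delivery event, use independent thinning to see the preempting arrivals as a rate-$\theta\lambda$ Poisson process, obtain $\Pr(D\mid U=t)=e^{-\theta\lambda t}$, and divide by $\Pr(D)=M_U(-\theta\lambda)$. The only cosmetic difference is that the paper writes the argument at the level of the density $f_T(t)=f_U(t)e^{-\theta\lambda t}/M_U(-\theta\lambda)$ whereas you compute the conditional MGF directly with an indicator; the invocation of PASTA is unnecessary but harmless, since memorylessness plus thinning suffices.
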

\begin{proof}
Let $D$ denote the event that a packet that goes into service is delivered. According to the packet management policy, the system time $T$ of a \textit{delivered packet} is equal to its service time. Thus, the distribution of $T$ satisfies
\begin{align}
&\mathrm{Pr}(t\le T<t+\epsilon)=\mathrm{Pr}(t\le U\le t+\epsilon\mid D)\nn
&\qquad=\frac{\mathrm{Pr}(t<U<t+\epsilon)\mathrm{Pr}(D\mid t<U<t+\epsilon)}{\mathrm{Pr}(D)}.
\end{align}
 Hence, $f_{T}(t)=\lim_{\epsilon\to0}\mathrm{Pr}(t\le T\le t+\epsilon)/\epsilon$ is calculated as 
 \begin{align}\label{Sys_time_G1}
  f_T(t)&= \dfrac{f_U(t)\mathrm{Pr}(D\mid U=t)}{\mathrm{Pr}(D)}.  
 \end{align}
% 
% Since the under-service packet is successfully delivered, all packets arriving during its service time must be blocked. 
%
%The remaining task is to calculate $\mathrm{Pr}(D| U=t)$ and $\mathrm{Pr}(D)$.
Because packets arriving at a busy server are admitted into service as Bernoulli trials with probability $\theta$, the arrivals of preempting packets, as  seen by the busy server, are a thinned Poisson process with arrival rate $\theta\lambda$. The event  $D$ occurs if and only if this thinned arrival process has zero arrivals during the service period $U=t$; i.e., \begin{align}\label{con_pro_eve_eq}
\mathrm{Pr}(D\mid U=t)=e^{-\theta\lambda t}.
    \end{align}
This implies 
\begin{align}
\mathrm{Pr}(D)&=\int_{0}^{\infty}\mathrm{Pr}(D\mid U=t)f_U(t)\,dt\nn
&=\int_{0}^{\infty}e^{-\theta\lambda t}f_U(t)\,dt
%=\mathbb{E}[e^{-\lambda\theta U}]
=M_{U}(-\theta\lambda).
\label{con_pro_eve_eq1}
\end{align}
By substituting \eqref{con_pro_eve_eq} and \eqref{con_pro_eve_eq1} into \eqref{Sys_time_G1}, 
%the PDF of the system time, $f_{T}(t)$ is given as 
\begin{align}
    f_{T}(t)=\dfrac{f_U(t)e^{-\theta\lambda t}}{M_{U}(-\theta\lambda)},
    \label{f_t1_0}
\end{align}
%Finally, using \eqref{con_pro_eve_eq2}, \
and the claim follows since $M_{T}(s)=\mathbb{E}[e^{sT}]$. 
% \begin{align}
% M_{T}(s)=\mathbb{E}[e^{sT}]=\dfrac{M_U(s-\theta\lambda)}{M_{U}(-\theta\lambda)},
%  \end{align}
% which completes the proof of Proposition~\ref{Pro1}.
\end{proof}

The next step is to derive the MGF of the interdeparture time $Y$, $M_{Y}(s)$, which is carried out in the following.
%proposition.
\begin{Pro}\label{Pro2}
The MGF of the  interdeparture time $Y$
%, $M_{Y}(s)$, 
is 
%given as
\begin{align}\label{mgfinterde1}
M_{Y}(s)=\dfrac{\lambda(\theta\lambda-s)M_U(s-\theta\lambda)}{(\lambda-s)(\theta\lambda M_U(s-\theta\lambda)-s)}.
\end{align}
\end{Pro}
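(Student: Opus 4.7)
The plan is to decompose the interdeparture time $Y$ into an idle-server gap and a busy-server period, and handle each piece via its MGF. Immediately after a delivery, the server is empty, so the next packet to enter service arrives after $W \sim \mathrm{Exp}(\lambda)$ by the Poisson arrival assumption. Once that packet enters service, let $X$ denote the (possibly repeated) time until some packet is finally delivered. Since $W$ depends only on post-departure arrivals while $X$ is determined by the subsequent service and preemption events, $W$ and $X$ are independent, so $M_Y(s)=\frac{\lambda}{\lambda-s}\,M_X(s)$.

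To derive $M_X(s)$ I would invoke the thinning property of the Poisson arrival process: while the server is busy, preempting arrivals form a Poisson process of rate $\theta\lambda$, so the time $Z$ to the first preemption is $\mathrm{Exp}(\theta\lambda)$ and independent of the current service time $U$. Comparing $U$ with $Z$ yields the distributional identity
\begin{equation*}
X \stackrel{\mathrm{d}}{=} U\,\mathbf{1}\{U<Z\} + (Z+X')\,\mathbf{1}\{Z<U\},
\end{equation*}
where $X'$ is an independent copy of $X$, since the memoryless property of $Z$ together with the i.i.d.\ service times makes the post-preemption dynamics stochastically identical to the system just after the original entry into service. Taking MGFs gives the linear fixed-point equation $M_X(s)=\mathbb{E}[e^{sU}\mathbf{1}\{U<Z\}]+\mathbb{E}[e^{sZ}\mathbf{1}\{Z<U\}]\,M_X(s)$.

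The two joint expectations are evaluated by conditioning. Using $\Pr(Z>u)=e^{-\theta\lambda u}$ one obtains $\mathbb{E}[e^{sU}\mathbf{1}\{U<Z\}]=M_U(s-\theta\lambda)$, while writing $\mathbb{E}[e^{sZ}\mathbf{1}\{Z<U\}]=\mathbb{E}[e^{sZ}]-\mathbb{E}[e^{sZ}\mathbf{1}\{Z\ge U\}]$ and integrating out $Z$ gives $\frac{\theta\lambda}{\theta\lambda-s}\bigl(1-M_U(s-\theta\lambda)\bigr)$. Substituting and solving the fixed-point equation produces $M_X(s)=\frac{(\theta\lambda-s)M_U(s-\theta\lambda)}{\theta\lambda M_U(s-\theta\lambda)-s}$, and multiplying by $M_W(s)=\lambda/(\lambda-s)$ yields the stated expression for $M_Y(s)$.

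The main obstacle I anticipate is a clean justification of the recursion, i.e., arguing that after a preemption the remaining dynamics are genuinely independent of the elapsed busy-server time and stochastically identical to the original busy-server period. Once this is pinned down via the memoryless preemption clock and the i.i.d.\ service-time assumption, the remaining steps reduce to elementary algebraic manipulation of the resulting linear equation in $M_X(s)$.
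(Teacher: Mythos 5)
Your argument is correct and arrives at the same formula by a route that is organized somewhat differently from the paper's. The paper builds a two-state semi-Markov chain on $\{q_0,q_1\}$, conditions on whether the in-service packet is delivered or preempted, derives the \emph{conditional} sojourn-time densities $f_\eta(t)=f_U(t)e^{-\theta\lambda t}/M_U(-\theta\lambda)$ and $f_{\bar\eta}(t)=\theta\lambda e^{-\theta\lambda t}(1-F_U(t))/(1-M_U(-\theta\lambda))$ in a separate lemma, and then sums the geometric series over the number of self-loops $q_1\to q_1$ to obtain $M_Y(s)=p\,\mathbb{E}[e^{s\tilde\eta}]\mathbb{E}[e^{s\eta}]/(1-\bar p\,\mathbb{E}[e^{s\bar\eta}])$. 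You instead write the busy-period length as a distributional fixed point $X \stackrel{\mathrm d}{=} U\mathbf{1}\{U<Z\}+(Z+X')\mathbf{1}\{Z<U\}$ with $Z\sim\mathrm{Exp}(\theta\lambda)$, take MGFs, and solve the resulting linear equation. Because you work directly with the joint (unnormalized) expectations $\mathbb{E}[e^{sU}\mathbf{1}\{U<Z\}]=M_U(s-\theta\lambda)$ and $\mathbb{E}[e^{sZ}\mathbf{1}\{Z<U\}]=\frac{\theta\lambda}{\theta\lambda-s}\bigl(1-M_U(s-\theta\lambda)\bigr)$, the delivery/preemption probabilities $p$ and $\bar p$ cancel implicitly and you never need the conditional densities of the paper's Lemma~3 — a modest but real simplification. (Algebraically the two are identical: $p\,\mathbb{E}[e^{s\eta}]$ and $\bar p\,\mathbb{E}[e^{s\bar\eta}]$ equal your two joint expectations, and the fixed-point solution $a/(1-b)$ is precisely the geometric sum.) What the paper's version buys in exchange is the explicit characterization of the semi-Markov sojourn-time PDFs, which it reuses elsewhere (notably $f_\eta$ coincides with $f_T$ from Proposition~1), and a picture (Fig.~2) that makes the decomposition visible. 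Your justification of the recursion via the memoryless preemption clock and i.i.d.\ service times is exactly the right point to emphasize, and it is sound.
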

\begin{proof}
%The proof of Lemma~\ref{Lemma2}  is provided in Appendix~\ref{Appendix-Lemma-2}.
To calculate the  MGF 
%of the interdeparture time $Y$ (defined as 
${{M}_{Y}(s)=\mathbb{E}[e^{sY}]}$, we first  characterize $Y$ using a semi-Markov chain. The semi-Markov chain, shown in Fig.~\ref{Semi-Chain_c}, represents the dynamics of the system occupancy states and transition probabilities of different states concerning the interdeparture time $Y$.
When a packet is successfully delivered to the sink, the system goes to state $q_0$, waiting for a fresh packet. State $q_{1}$ indicates that a packet is under service. From the graph, the interdeparture time $Y$ is calculated by characterizing the required time to start from state $q_0$ and return to $q_0$. 
The transitions between the states of the graph in Fig.~\ref{Semi-Chain_c} are explained as follows.

\begin{enumerate}
\item $ q_0\rightarrow q_{1}$: The system is in the state $q_0$ and a packet arrives. This transition happens with probability $1$. We denote the sojourn time of the system in state $q_0$ before this transition by $\tilde\eta$, which is exponentially distributed with parameter $\lambda$, i.e., $f_{\tilde\eta}(t)=\lambda e^{-\lambda t}$.

\item $q_1\rightarrow q_0$: The system is in state $q_1$, i.e., serving a packet, and the packet completes service and is delivered to the sink. This transition happens with probability ${p}=\mathrm{Pr}(D)=M_{U}(-\theta\lambda)$ (see \eqref{con_pro_eve_eq1}). We denote the sojourn time of the system in state $q_1$ before this transition by ${\eta}$, which has the distribution ${\mathrm{Pr}({\eta}>t)=\mathrm{Pr}(U>t\mid D)}$.

\item $q_{1}\rightarrow q_{1}$: The system is in state $q_{1}$ and a fresh packet, possibly after several packets that were blocked and cleared, arrives and preempts the under-service packet. 
Let $\Dbar$ denote the event where a packet under service is not delivered because it is preempted (and 
%and the $I$th arriving packet during the service time preempts the packet and enters the server 
and thus the transition $q_{1}\rightarrow q_{1}$ occurs.) This transition happens with probability $\bar p=\mathrm{Pr}(\Dbar)$. Since $\Dbar$ is the complement of $D$, $\bar p=1-M_{U}(-\theta\lambda)$. We denote the sojourn time of the system in state $q_{1}$ before this transition by $\bar\eta$.
%which has the distribution ${\mathrm{Pr}(\bar\eta>t)=\mathrm{Pr}(\sum_{j=1}^{I}X^j>t\mid \Dbar)}$.
\end{enumerate}

\begin{figure}
\centering
\includegraphics[width=.4\linewidth,trim = 0mm 0mm 0mm 0mm,clip]{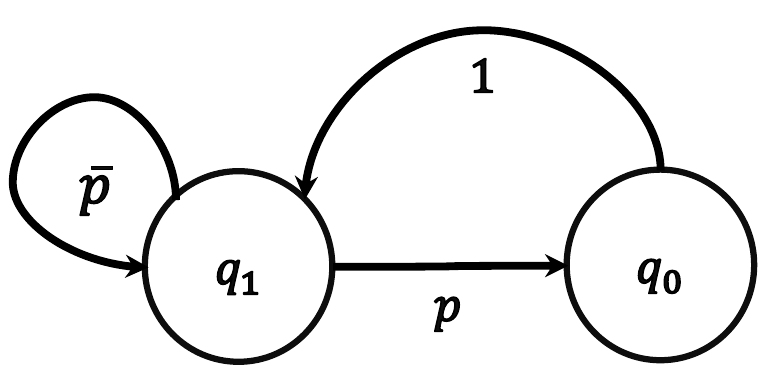}\vspace{-2mm}
\caption{The semi-Markov chain corresponding to the interdeparture time, $Y$. }  
\label{Semi-Chain_c}
\vspace{-4mm}
\end{figure}

\begin{figure*}[h]
\centering
\subfigure[$\lambda=0.2$]{
\includegraphics[width=0.22\textwidth]{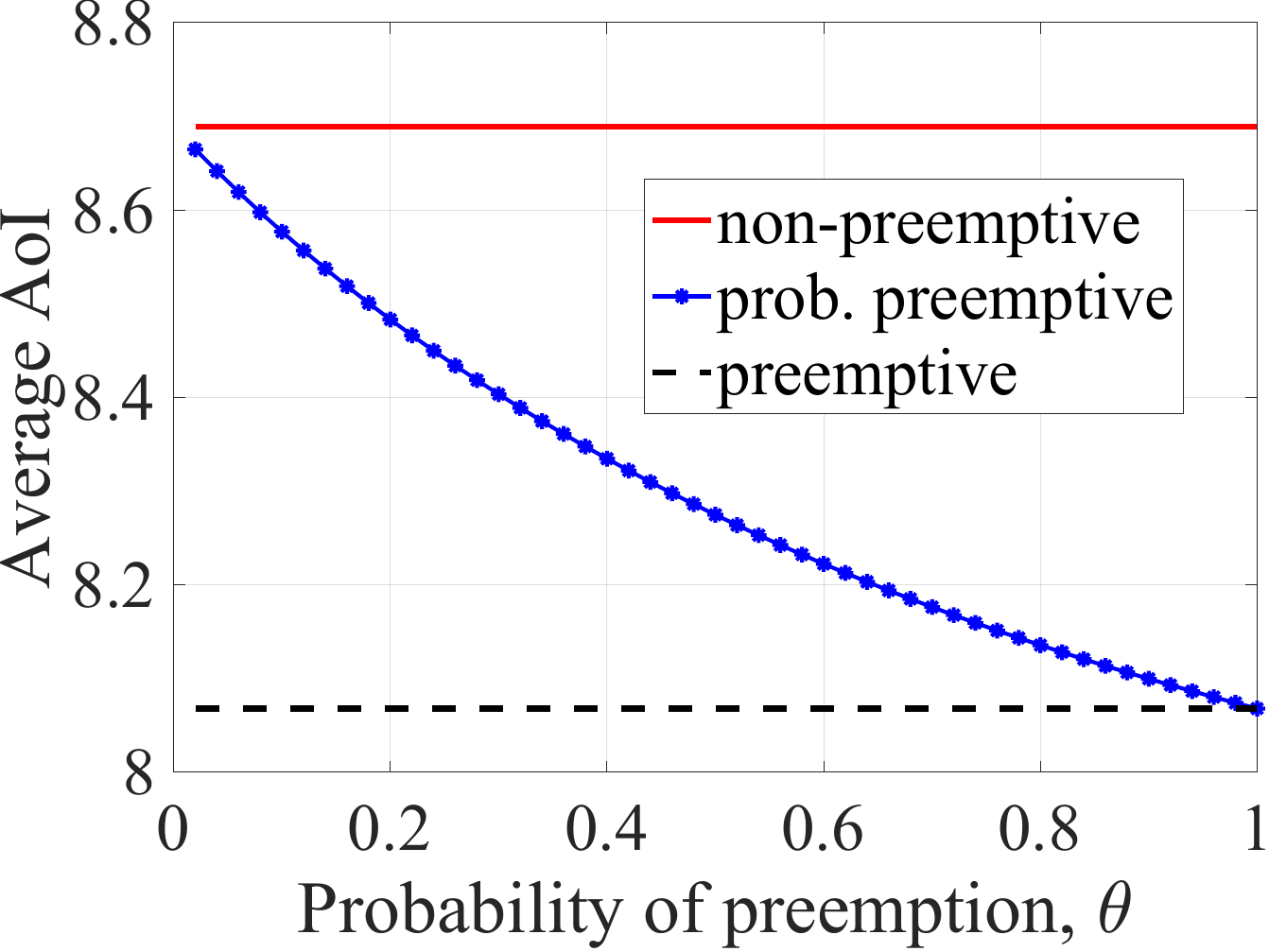}
\label{FP2}
}
\subfigure[$\lambda=1$]
{
\includegraphics[width=0.22\textwidth]{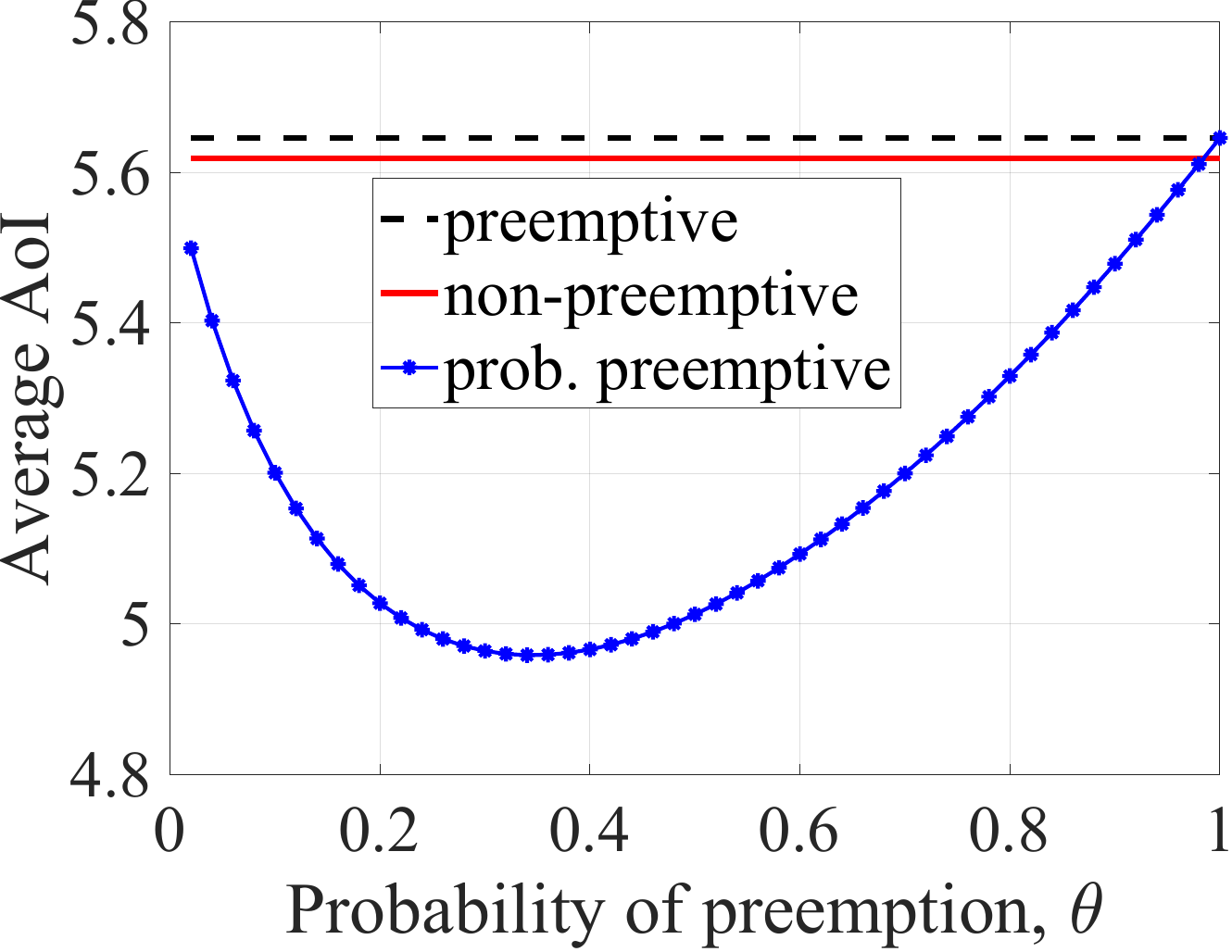}
\label{F1}
}
\subfigure[$\lambda=2$]{
\includegraphics[width=0.22\textwidth]{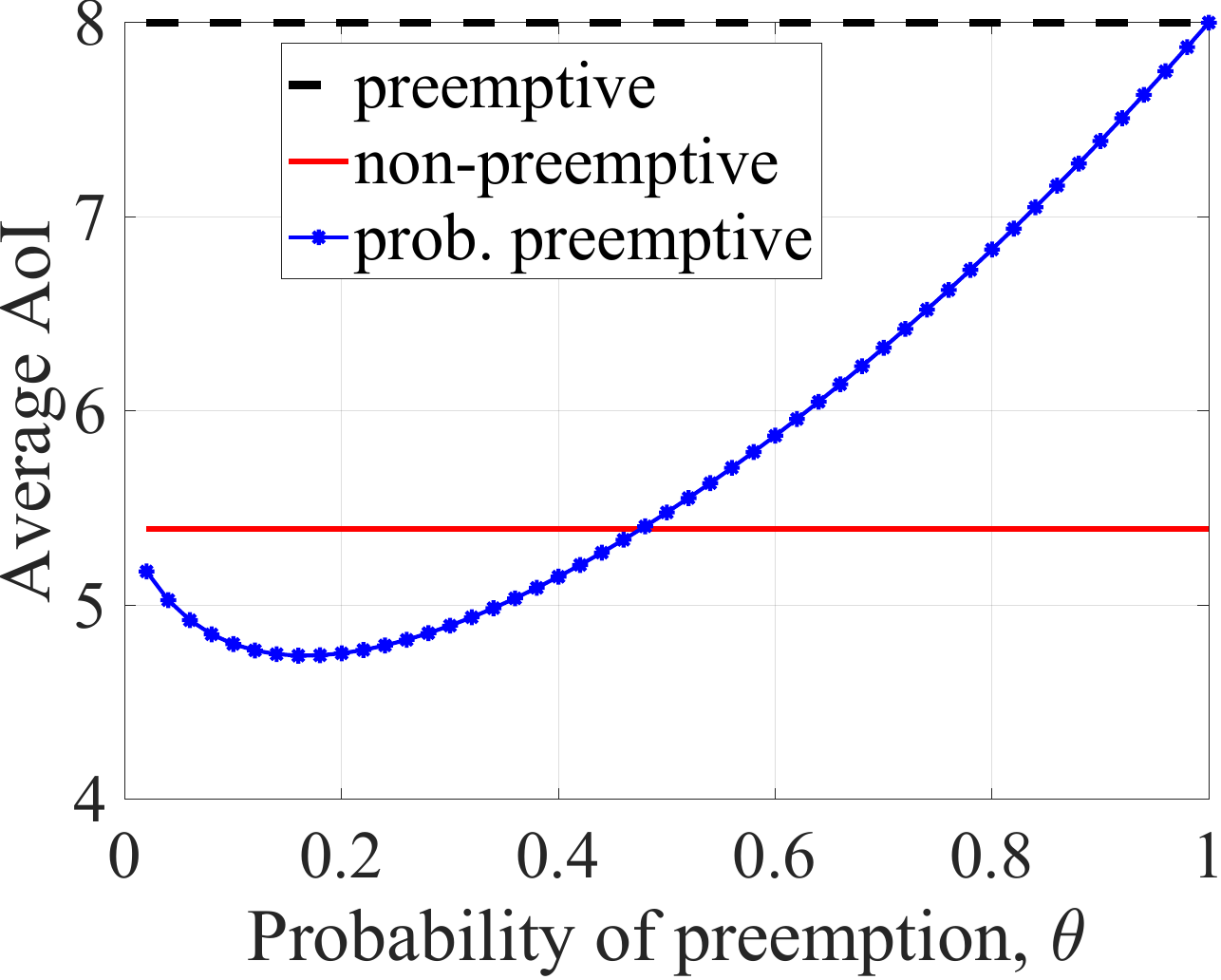}
\label{F2}
}
\subfigure[$\lambda=3$]{
\includegraphics[width=0.22\textwidth]{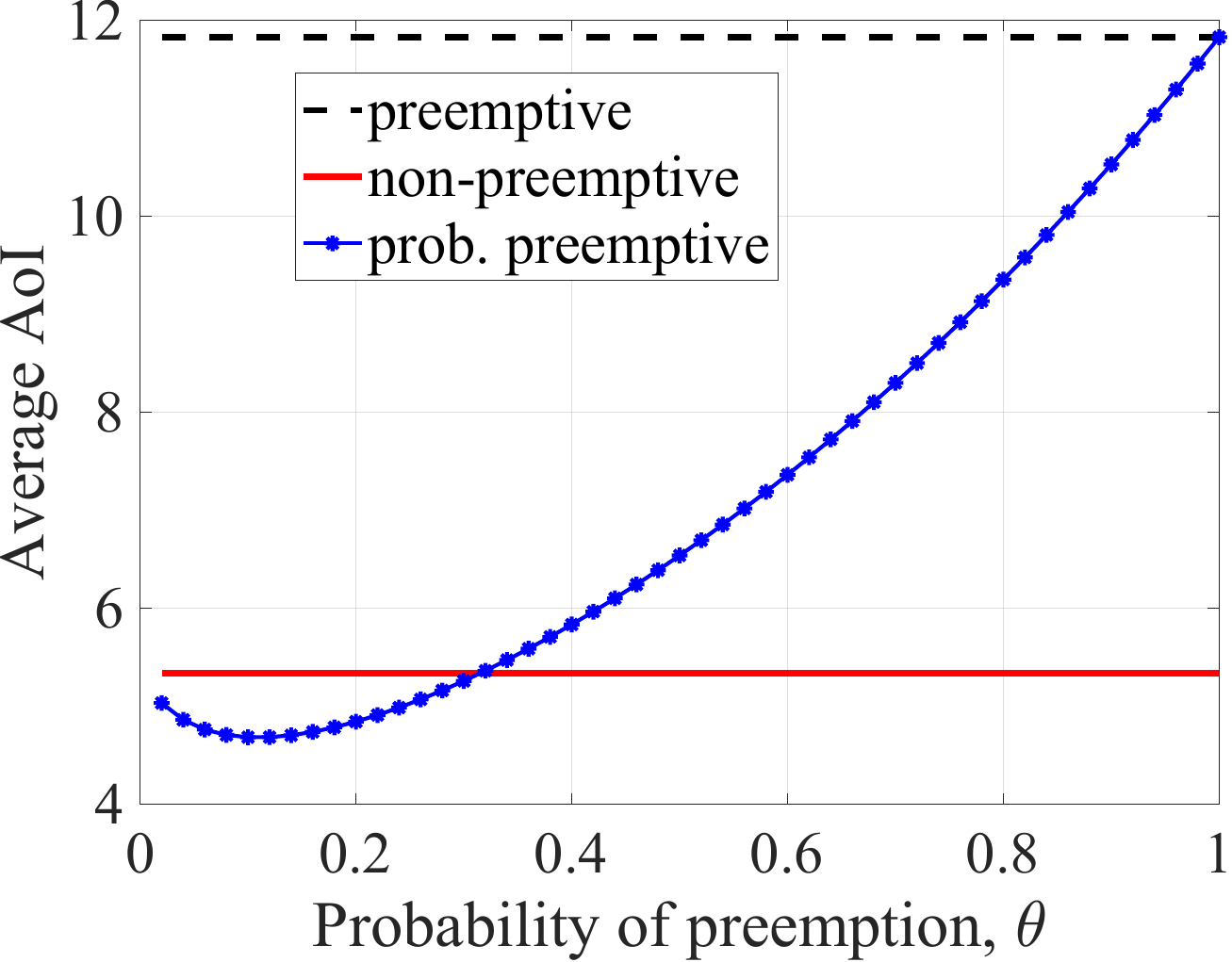}
\label{F3}
}%\vspace{-2mm}
\caption{The average AoI of different policies as a function of the probability of preemption $\theta$. %\RY{preemptive is missing an e in (b) and (c)}\blue{MM:Thanks, I fixed it.}
}
\label{Versus_P}
\vspace{-3mm}
\end{figure*}

\begin{figure*}[h]
\centering
\subfigure[$\lambda=0.2$]{
\includegraphics[width=0.22\textwidth]{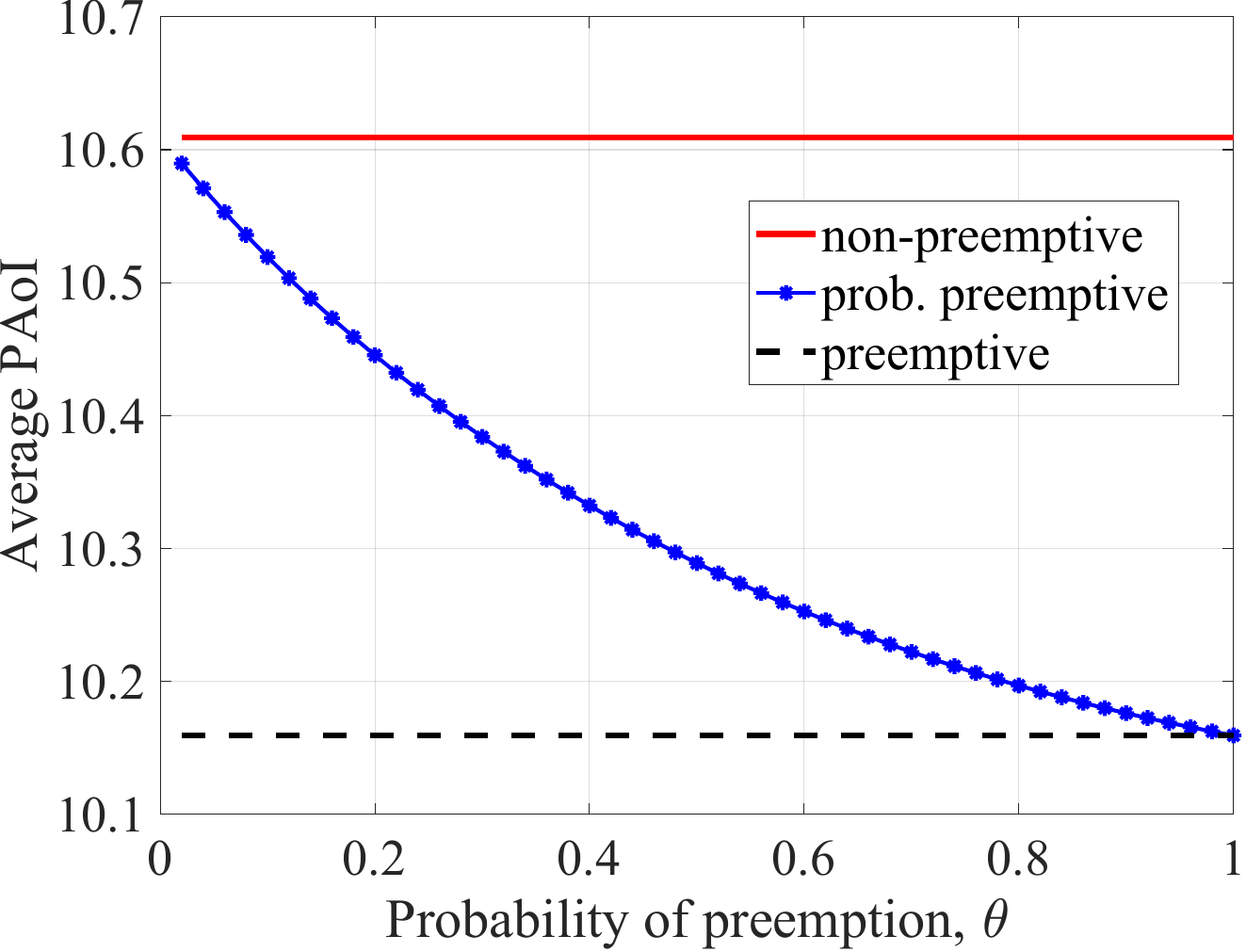}
\label{PFP2}
}
\subfigure[$\lambda=1$]
{
\includegraphics[width=0.22\textwidth]{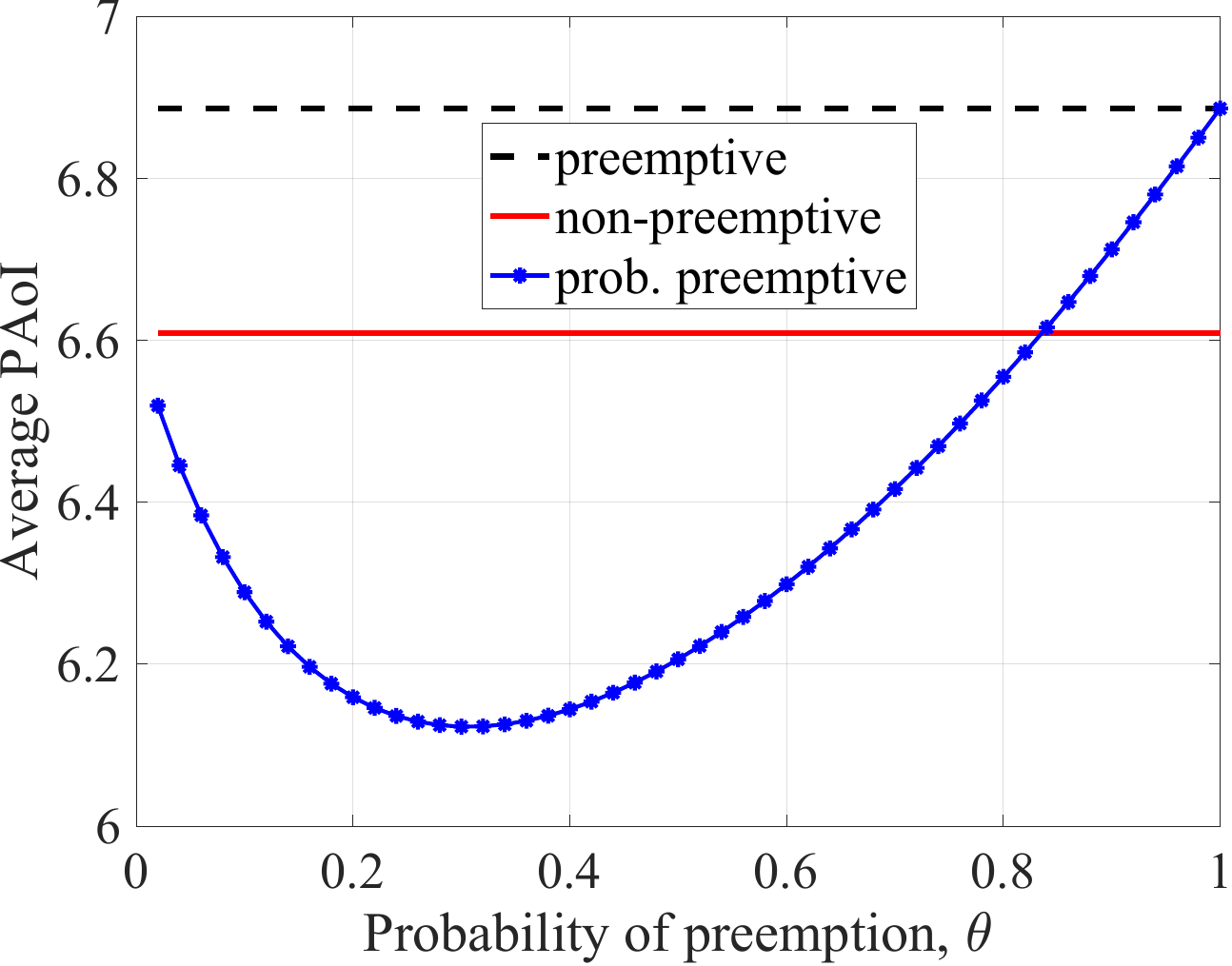}
\label{PF1}
}
\subfigure[$\lambda=2$]{
\includegraphics[width=0.22\textwidth]{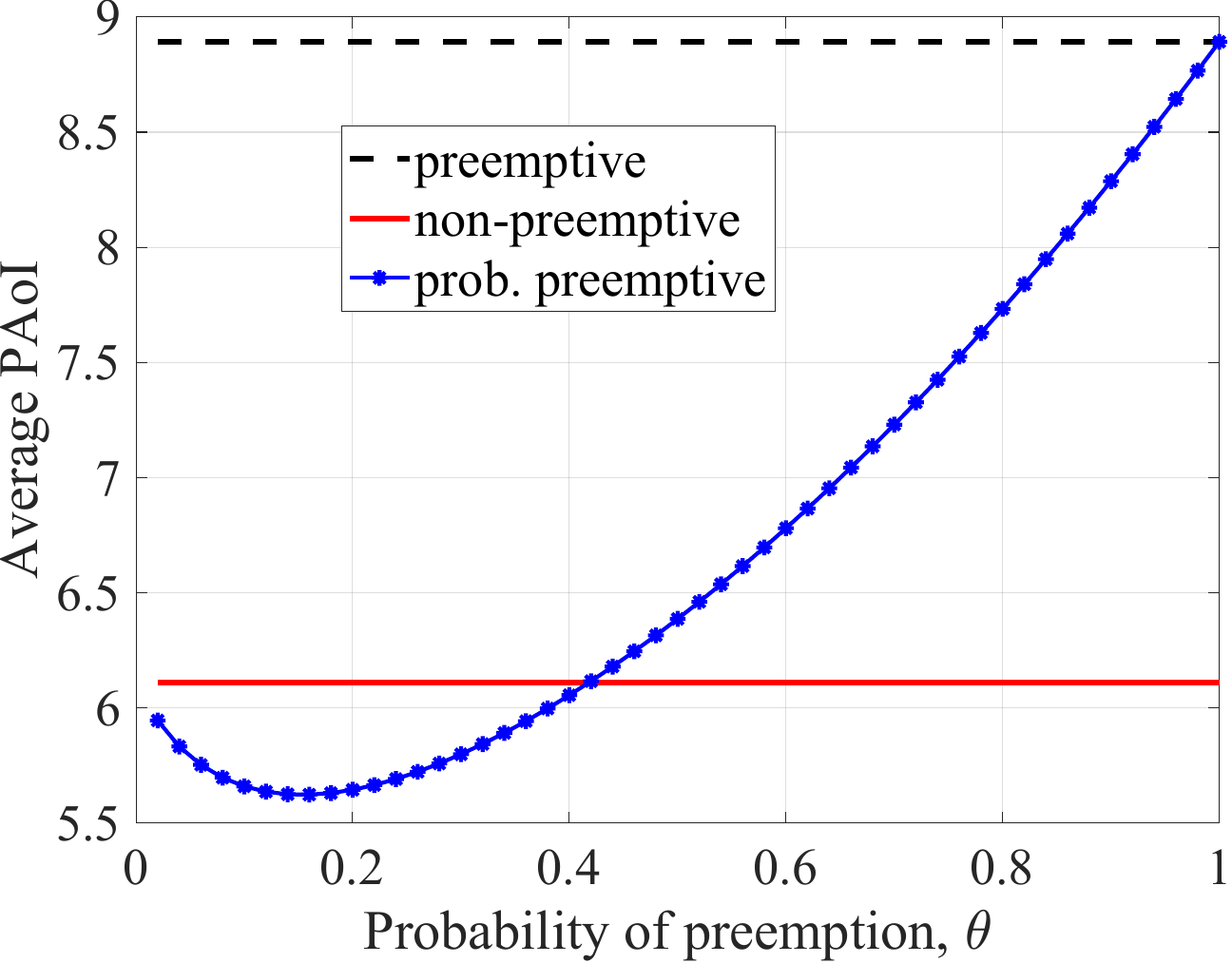}
\label{PF2}
}
\subfigure[$\lambda=3$]{
\includegraphics[width=0.22\textwidth]{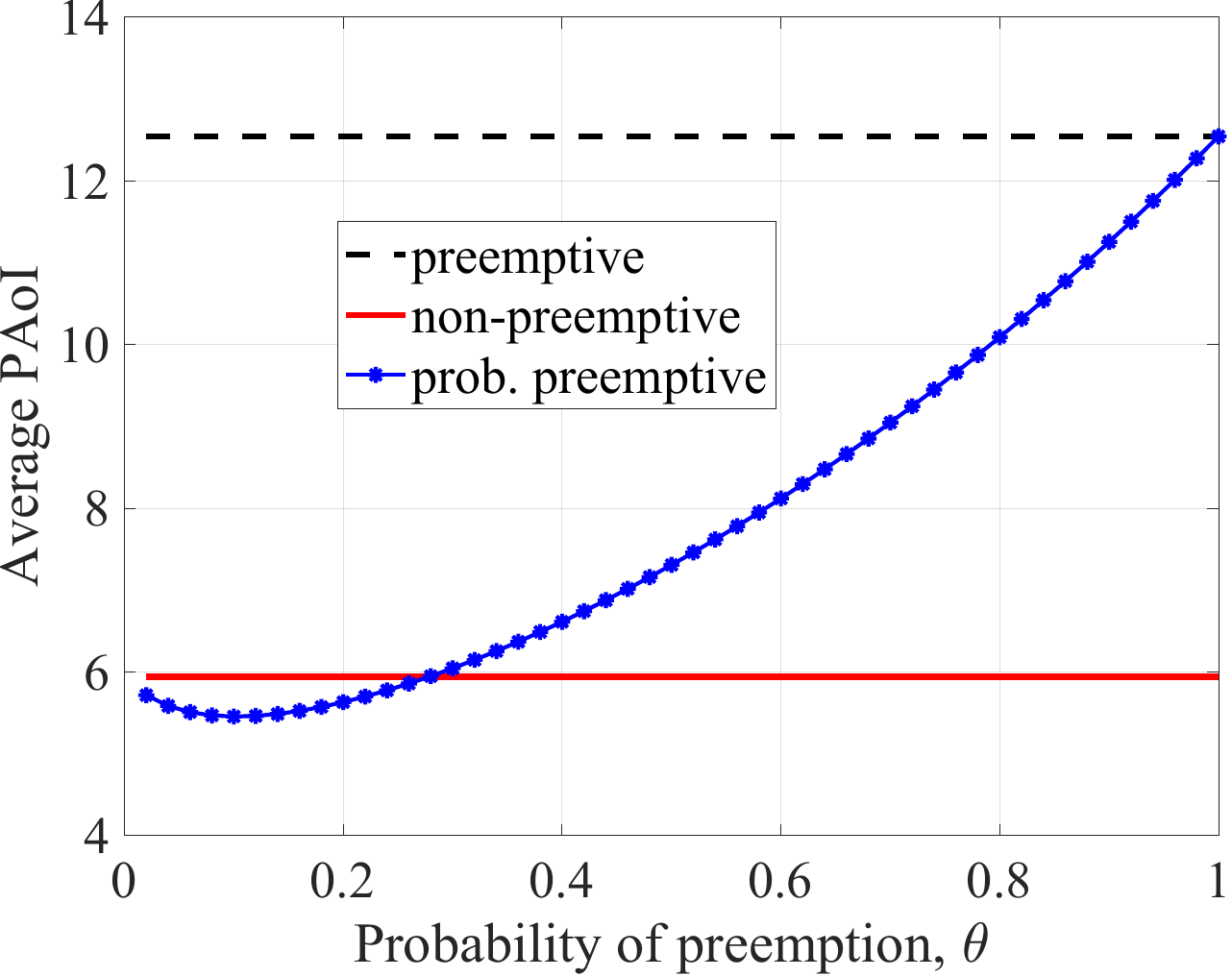}
\label{PF3}
}\vspace{-2mm}
\caption{The average PAoI of different policies as a function of the probability of preemption $\theta$. %\RY{preemptive is missing an e in (b) and (c)}\blue{MM:Thanks, I fixed it.}
}
\label{Peak_Versus_P}
\vspace{-4mm}
\end{figure*}

In the following lemma, we calculate the sojourn time distributions. 
\begin{lemm}\label{lemmfaap}
The PDFs of the random variables $\eta$ and $\bar\eta$ are %given as
\begin{align}
%f_{\eta_{c'}}(t)=\lambda e^{-\lambda t},\\&\nonumber 
f_{\eta}(t)=\dfrac{f_U(t)e^{-\theta\lambda t}}{M_{U}(-\theta\lambda)},
\quad
f_{\bar\eta}(t)=\dfrac{\theta\lambda e^{-\theta\lambda t}(1-F_U(t))}{1-M_{U}(-\theta\lambda)}.
\end{align}
\end{lemm}
\begin{proof}
The PDF of the random variable $\eta$ is the same as that of  the system time $T$, which was derived in \eqref{f_t1_0}.
To derive the PDF $\bar\eta$, we observe that packets arriving at the busy server are admitted into service as Bernoulli trials with probability $\theta$. Thus, the arrival time $R$ of a preempting packet, as  seen by the busy server, is the first arrival time of a thinned Poisson process with arrival rate $\theta\lambda$. Thus, $R$ has an exponential PDF with parameter $\theta\lambda$.
The event $\Dbar$ occurs when a preemption occurs prior to the completion of the under-service packet. It follows that
%
%
% if for some integer $I>0$ there are $I-1$ packets arriving during the service time that are blocked, but  the $I$th arriving packet preempts the under-service packet. 
% Let $X^1,X^2,\ldots$ denoting the interarrival times of these arriving packets. Thus $R^i=\sum_{j=1}^i X^j$ is the $i$th such arrival time, and  
$\bar\eta$ has PDF 
\begin{align}\nonumber
f_{\bar\eta}(t)
&\stackrel{}{=}\lim_{\epsilon\rightarrow 0}\dfrac{\mathrm{Pr}(t<R<t+\epsilon\mid \Dbar)}{\epsilon}\\&\nonumber
=\lim_{\epsilon\rightarrow 0}\dfrac{\mathrm{Pr}(t<R<t+\epsilon)\mathrm{Pr}(\Dbar\mid t<R<t+\epsilon)}{\epsilon\mathrm{Pr}(\Dbar)}\nn
%&\nonumber
%~~~~~~~~~~~~~...\times\mathrm{Pr}(\Dbar\mid t<R^I<t+\epsilon)\\&\nonumber
&=\dfrac{\mathrm{Pr}(U>t)}{\mathrm{Pr}(\Dbar)}\lim_{\epsilon\rightarrow 0}\dfrac{\mathrm{Pr}(t<R<t+\epsilon)}{\epsilon}\nn
&\label{f_t1}
%\stackrel{(a)}
=\dfrac{1-F_U(t)}{1-M_{U}(-\theta\lambda)}
f_R(t).
\end{align}
Finally, by substituting the exponential $(\theta\lambda)$ PDF of $R$, the PDF of $f_{\bar\eta}(t)$ is verified. 
%
%\begin{align}\label{f_t132}
% $f_{\bar\eta}(t)
% \stackrel{}{=}\dfrac{\theta\lambda e^{-\theta\lambda t}(1-F_s(t))}{1-M_{U}(-\theta\lambda)}.$
%\stackrel{(a)}{=}\dfrac{f_S(t)e^{-\lambda t}}{L_{\lambda_c}},
%\end{align}
\end{proof}
%\vspace{-3mm}
As shown in Fig.~\ref{Semi-Chain_c}, the interdeparture time between two consecutive delivered packets is equal to {the total sojourn time} experienced by the status update system between starting from state $q_0$ and returning to state $q_0$. 
The total sojourn time starting from $q_0$ and returning to $q_0$ consists of a summation of the individual sojourn times in each state for all possible paths $\{q_0,\ldots,q_0\}$. 
Thus, the random variable $Y$ can be characterized by random variables $\tilde\eta$, $\eta$, and $\bar\eta$, i.e., sojourn times in different states, and their numbers of occurrences.    
Let the random variable $V$
denote the numbers of occurrences of $\bar\eta$, between starting from $q_0$ and returning to $q_0$. Consequently, $Y$ can be presented as 
\begin{equation}\label{Y_c_sojourn}
Y=\tilde\eta+V\bar\eta+\eta.
\end{equation} 
As an example, a path starting from $q_0$ to $q_0$ could be $\{q_0\rightarrow q_1,q_1\rightarrow q_1,q_1\rightarrow q_1,q_1\rightarrow q_0\}$ which occurs with probability $p\bar{p}^2$.
Using $Y$ defined in \eqref{Y_c_sojourn}, 
 ${M}_{Y}(s)$ can be calculated as 
\begin{align}\nonumber
{M}_{Y}(s)=\mathbb{E}[e^{s(\tilde\eta+V\bar\eta+\eta)}]
&\stackrel{(a)}{=}\sum_{v=0}^{\infty}p\bar{p}^v\mathbb{E}[e^{s\tilde\eta}]\mathbb{E}[e^{s\bar\eta}]^v\mathbb{E}[e^{s\eta}]
\\&\label{mgfequ}
=\dfrac{{p}\mathbb{E}[e^{s\tilde\eta}]\mathbb{E}[e^{s\eta}]}{1-\bar{p}\mathbb{E}[e^{s\bar\eta}]},
\end{align}
%\hline
%\end{figure*}
where equality $ (a) $ follows because the sojourn times in different states, i.e., $\tilde\eta$, $ \eta$, and $\bar\eta$, and $V$ are all mutually independent.
Next, we present the 
%To derive $ {M}_{Y}(s)$ given in \eqref{mgfequ}, we need to calculate the 
values of 
$\mathbb{E}[e^{s\tilde\eta}],~\mathbb{E}[e^{s\eta}]$, and $\mathbb{E}[e^{s\bar\eta}]$.
%, which is carried out in the following lemma. 

\begin{lemm}\label{rem01}
By using the PDFs presented in  Lemma \ref{lemmfaap}, the values of $\mathbb{E}[e^{s\tilde\eta}],~\mathbb{E}[e^{s\eta}]$, and $\mathbb{E}[e^{s\bar\eta}]$ are given as
\begin{align}\nonumber
&\mathbb{E}[e^{s\tilde\eta}]=\dfrac{\lambda}{\lambda-s},~~~~~~~
\mathbb{E}[e^{s\eta}]=\dfrac{M_U(s-\theta\lambda)}{M_{U}(-\theta\lambda)},\\&\label{MGfprob}
\mathbb{E}[e^{s\bar\eta}]=\dfrac{\theta\lambda(1-M_U(s-\theta\lambda))}{(M_{U}(-\theta\lambda)-1)(s-\theta\lambda)}.
\end{align}
% 		where $M_S(s-\lambda_{c'})=\mathbb{E}[e^{(s-\lambda_{c'})S}]$.  
\end{lemm}

Substituting \eqref{MGfprob} into \eqref{mgfequ} completes
the proof.
%of Proposition~\ref{Pro2}.
\end{proof}

Finally, substituting the MGF of the system time derived in  \eqref{mgfsystemtime} and the MGF of the interdeparture time derived in \eqref{mgfinterde1}  into \eqref{MGFofagegeneral}  results in the MGF of the AoI, ${M}_{\delta}(s)$, given in Theorem~\ref{T_source-aware}. In addition, substituting   \eqref{mgfsystemtime} and \eqref{mgfinterde1} into \eqref{MGFpeak} results in the MGF of the PAoI, ${M}_{A}(s)$, given in Theorem~\ref{T_source-aware}.

%\vspace{-4mm}
%%%%%%%%%%%%%%%%%%%%%%%%%%%%%%%%%%%
\section{Numerical Results}\label{Numerical Results}
In this section, using Corollary~\ref{agemg11theoremblock}, we evaluate the average (peak) AoI under the probabilistically preemptive policy when the service time $U$ 
%of each packet in the system follows a log-normal distribution. 
has the log-normal PDF %of a random variable $S$ following a log-normal distribution is defined as
$f_U(t)=\frac{1}{t\omega\sqrt{2\pi}}\exp\left(-\frac{(\ln{t}-\alpha)^2}{2\omega^2}\right)\!,~t>0,$
for parameters $ \alpha\in(-\infty,\infty),$ and ${\omega>0}$. The  expected service time is  $\E{U}=\exp(\alpha+{\omega^2}/{2})$.

 %Fig.~\ref{Versus_P} (resp. Fig.~\ref{Peak_Versus_P}) depicts the average AoI (PAoI) under different policies as a function of the probability of preemption $\theta$ with parameters $\alpha=0.75$, and $\omega=0.75$, $\lambda=0.2$ in Fig.~\ref{FP2} (resp. Fig.~\ref{PFP2}), $\lambda=1$ in Fig.~\ref{F1} (resp. Fig.~\ref{PF1}),  $\lambda=2$ in Fig.~\ref{F2} (resp. Fig.~\ref{PF2}), and $\lambda=3$ in Fig.~\ref{F3} (resp. Fig.~\ref{PF3}). 
 Fig.~\ref{Versus_P} (resp. Fig.~\ref{Peak_Versus_P}) depicts the average AoI (resp. PAoI) under different policies as a function of the probability of preemption $\theta$ for different values of $\lambda$. The service time parameters are set as  $\alpha=0.75$ and $\omega=0.75$. For $\lambda=0.2$, the results are shown in Fig.~\ref{FP2} (resp. Fig.~\ref{PFP2}); for $\lambda=1$, in Fig.~\ref{F1} (resp. Fig.~\ref{PF1}); for $\lambda=2$, in Fig.~\ref{F2} (resp. Fig.~\ref{PF2}); and for $\lambda=3$, in Fig.~\ref{F3} (resp. Fig.~\ref{PF3}).
 From the figures, when $\theta\rightarrow 1$ the probabilistically preemptive policy performs as the preemptive policy, and when $\theta\rightarrow 0$ it performs as the non-preemptive policy, as expected. As it can be seen, by choosing a proper value for the probability of preemption $\theta$ the average AoI can be minimized, e.g., for $\lambda=0.2$ the minimum average AoI is achieved by $\theta=1$, and for $\lambda=1$ the minimum average AoI is achieved by $\theta=0.34$. It is worth noting that a similar behavior is observed for different parameters of the considered service time distribution and for other service time distributions, e.g., gamma distribution. 

%\RY{It's striking to me that the optimizing $\theta$ qualitatively changes with the arrival rate $\lambda$.}

%\RY{Is it possible that there is some service time PDF such that the AoI curve is concave in $\theta$, i.e. low for $\theta=0$ and low for $\theta=1$ but high in between?}\blue{Right, I will think/search to see if we can find one.}

%\RY{Do we have room for PAoI plots now?} \blue{MM: We do have space. We can have 5 pages with one additional page for references. Should I add PAoI plots?}

\vspace{-3mm}
\section{Conclusions}\label{Conclusions}
We considered a single-source M/G/1/1 queueing system and derived the MGFs of the AoI and PAoI under the probabilistically preemptive policy. Using the MGF of the (peak) AoI, we studied the average (peak) AoI under a log-normally distributed service time. The results showed that by using an appropriate value for the probability of preemption, the system's performance can be significantly improved.

\bibliographystyle{IEEEtran}
\bibliography{Bibliography}

% Generated by IEEEtran.bst, version: 1.14 (2015/08/26)
\begin{thebibliography}{10}
\providecommand{\url}[1]{#1}
\csname url@samestyle\endcsname
\providecommand{\newblock}{\relax}
\providecommand{\bibinfo}[2]{#2}
\providecommand{\BIBentrySTDinterwordspacing}{\spaceskip=0pt\relax}
\providecommand{\BIBentryALTinterwordstretchfactor}{4}
\providecommand{\BIBentryALTinterwordspacing}{\spaceskip=\fontdimen2\font plus
\BIBentryALTinterwordstretchfactor\fontdimen3\font minus \fontdimen4\font\relax}
\providecommand{\BIBforeignlanguage}[2]{{%
\expandafter\ifx\csname l@#1\endcsname\relax
\typeout{** WARNING: IEEEtran.bst: No hyphenation pattern has been}%
\typeout{** loaded for the language `#1'. Using the pattern for}%
\typeout{** the default language instead.}%
\else
\language=\csname l@#1\endcsname
\fi
#2}}
\providecommand{\BIBdecl}{\relax}
\BIBdecl

\bibitem{6195689}
S.~Kaul, R.~Yates, and M.~Gruteser, ``Real-time status: How often should one update?'' in \emph{Proc. IEEE Int. Conf. on Computer. Commun. (INFOCOM)}, Orlando, FL, USA, Mar. 25--30, 2012, pp. 2731--2735.

\bibitem{6875100}
M.~Costa, M.~Codreanu, and A.~Ephremides, ``Age of information with packet management,'' in \emph{Proc. IEEE Int. Symp. Inform. Theory}, Honolulu, HI, USA, Jun. 20--23, 2014, pp. 1583--1587.

\bibitem{6310931}
S.~K. Kaul, R.~D. Yates, and M.~Gruteser, ``Status updates through queues,'' in \emph{Proc. Conf. Inform. Sciences Syst. (CISS)}, Princeton, NJ, USA, Mar. 21--23, 2012, pp. 1--6.

\bibitem{7415972}
M.~Costa, M.~Codreanu, and A.~Ephremides, ``On the age of information in status update systems with packet management,'' \emph{{IEEE} Trans. Inform. Theory}, vol.~62, no.~4, pp. 1897--1910, Apr. 2016.

\bibitem{8469047}
R.~D. {Yates} and S.~K. {Kaul}, ``The age of information: Real-time status updating by multiple sources,'' \emph{{IEEE} Trans. Inform. Theory}, vol.~65, no.~3, pp. 1807--1827, Mar. 2019.

\bibitem{8437591}
S.~K. {Kaul} and R.~D. {Yates}, ``Age of information: Updates with priority,'' in \emph{Proc. IEEE Int. Symp. Inform. Theory}, Vail, CO, USA, Jun. 17--22, 2018, pp. 2644--2648.

\bibitem{8406966}
R.~D. {Yates}, ``Age of information in a network of preemptive servers,'' in \emph{Proc. IEEE Int. Conf. on Computer. Commun. (INFOCOM)}, Honolulu, HI, USA, Apr. 15--19 2018, pp. 118--123.

\bibitem{8437907}
------, ``Status updates through networks of parallel servers,'' in \emph{Proc. IEEE Int. Symp. Inform. Theory}, Vail, CO, USA, Jun. 17--22, 2018, pp. 2281--2285.

\bibitem{9013935}
A.~{Javani}, M.~{Zorgui}, and Z.~{Wang}, ``Age of information in multiple sensing,'' in \emph{Proc. IEEE Global Telecommun. Conf.}, Waikoloa, HI, USA, Dec. 9--13, 2019.

\bibitem{9048914}
S.~{Farazi}, A.~G. {Klein}, and D.~{Richard Brown}, ``Average age of information in multi-source self-preemptive status update systems with packet delivery errors,'' in \emph{Proc. Annual Asilomar Conf. Signals, Syst., Comp.}, Pacific Grove, CA, USA, Nov. 3--6, 2019, pp. 396--400.

\bibitem{9252168}
M.~Moltafet, M.~Leinonen, and M.~Codreanu, ``Average {AoI} in multi-source systems with source-aware packet management,'' \emph{{IEEE} Trans. Commun.}, vol.~69, no.~2, pp. 1121--1133, Feb. 2021.

\bibitem{9162681}
M.~{Moltafet}, M.~{Leinonen}, and M.~{Codreanu}, ``Average age of information in a multi-source {M/M/1} queueing model with {LCFS} prioritized packet management,'' in \emph{Proc. IEEE Int. Conf. on Computer. Commun. (INFOCOM) Workshop}, Toronto, Canada, Jul. 6--9, 2020, pp. 303--308.

\bibitem{Moltafet2020mgf}
M.~Moltafet, M.~Leinonen, and M.~Codreanu, ``Moment generating function of the {AoI} in a two-source system with packet management,'' \emph{{IEEE} Wireless Commun. Lett.}, vol.~10, no.~4, pp. 882--886, Apr. 2021.

\bibitem{9611498}
------, ``Moment generating function of the {AoI} in multi-source systems with computation-intensive status updates,'' in \emph{Proc. IEEE Inform. Theory Workshop}, Kanazawa, Japan, Oct. 17--21, 2021, pp. 1--6.

\bibitem{9705518}
M.~A. Abd-Elmagid and H.~S. Dhillon, ``Closed-form characterization of the {MGF} of {AoI} in energy harvesting status update systems,'' \emph{{IEEE} Trans. Inform. Theory}, vol.~68, no.~6, pp. 3896--3919, Jun. 2022.

\bibitem{9119460}
N.~{Akar}, O.~{Dogan}, and E.~U. {Atay}, ``Finding the exact distribution of (peak) age of information for queues of {PH/PH/1/1} and {M/PH/1/2} type,'' \emph{{IEEE} Trans. Commun.}, vol.~68, no.~9, pp. 5661--5672, Jun. 2020.

\bibitem{8406909}
J.~P. {Champati}, H.~{Al-Zubaidy}, and J.~{Gross}, ``Statistical guarantee optimization for age of information for the {D/G/1} queue,'' in \emph{Proc. IEEE Int. Conf. on Computer. Commun. (INFOCOM) Workshop}, Honolulu, HI, USA, Apr. 15--19, 2018, pp. 130--135.

\bibitem{8006504}
E.~Najm, R.~Yates, and E.~Soljanin, ``Status updates through {M/G/1/1} queues with {HARQ},'' in \emph{Proc. IEEE Int. Symp. Inform. Theory}, Aachen, Germany, Jun. 25--30 2017, pp. 131--135.

\bibitem{8006592}
Y.~Inoue, H.~Masuyama, T.~Takine, and T.~Tanaka, ``The stationary distribution of the age of information in {FCFS} single-server queues,'' in \emph{Proc. IEEE Int. Symp. Inform. Theory}, Aachen, Germany, Jun. 25--30, 2017, pp. 571--575.

\bibitem{8820073}
Y.~{Inoue}, H.~{Masuyama}, T.~{Takine}, and T.~{Tanaka}, ``A general formula for the stationary distribution of the age of information and its application to single-server queues,'' \emph{{IEEE} Trans. Inform. Theory}, vol.~65, no.~12, pp. 8305--8324, Aug. 2019.

\bibitem{7541764}
E.~{Najm} and R.~{Nasser}, ``Age of information: The gamma awakening,'' in \emph{Proc. IEEE Int. Symp. Inform. Theory}, Barcelona, Spain, Jul. 10--16, 2016, pp. 2574--2578.

\bibitem{9048933}
A.~{Soysal} and S.~{Ulukus}, ``Age of information in {G/G/1/1} systems,'' in \emph{Proc. Annual Asilomar Conf. Signals, Syst., Comp.}, Pacific Grove, CA, USA, Nov. 3--6, 2019, pp. 2022--2027.

\bibitem{Arxakar2024age}
\BIBentryALTinterwordspacing
N.~Akar and S.~Ulukus, ``Age of information in a single-source generate-at-will dual-server status update system,'' 2024. [Online]. Available: \url{https://arxiv.org/abs/2404.01229}
\BIBentrySTDinterwordspacing

\bibitem{9099557}
M.~{Moltafet}, M.~{Leinonen}, and M.~{Codreanu}, ``On the age of information in multi-source queueing models,'' \emph{{IEEE} Trans. Commun.}, vol.~68, no.~8, pp. 5003--5017, May 2020.

\bibitem{inoue2024exact}
\BIBentryALTinterwordspacing
Y.~Inoue and T.~Takine, ``Exact analysis of the age of information in the multi-source {M/GI/1} queueing system,'' 2024. [Online]. Available: \url{https://arxiv.org/abs/2404.05167}
\BIBentrySTDinterwordspacing

\bibitem{8886357}
E.~{Najm}, R.~{Nasser}, and E.~{Telatar}, ``Content based status updates,'' \emph{{IEEE} Trans. Inform. Theory}, vol.~66, no.~6, pp. 3846--3863, Oct. 2020.

\bibitem{8406928}
E.~Najm and E.~Telatar, ``Status updates in a multi-stream {M/G/1/1} preemptive queue,'' in \emph{Proc. IEEE Int. Conf. on Computer. Commun. (INFOCOM)}, Honolulu, HI, USA, Apr. 15--19, 2018, pp. 124--129.

\bibitem{9500775}
D.~Deng, Z.~Chen, Y.~Jia, L.~Liang, S.~Fang, and M.~Wang, ``Age of information in a multiple stream {M/G/1/1} non-preemptive queue,'' in \emph{Proc. IEEE Int. Conf. Commun.}, Montreal, QC, Canada, Jun. 14--23 2021, pp. 1--6.

\bibitem{9519697}
O.~Dogan and N.~Akar, ``The multi-source probabilistically preemptive {M/PH/1/1} queue with packet errors,'' \emph{{IEEE} Trans. Commun.}, vol.~69, no.~11, pp. 7297--7308, 2021.

\bibitem{9869867}
M.~Moltafet, M.~Leinonen, and M.~Codreanu, ``Moment generating function of age of information in multisource {M/G/1/1} queueing systems,'' \emph{{IEEE} Trans. Commun.}, vol.~70, no.~10, pp. 6503--6516, 2022.

\bibitem{10139823}
N.~Akar and E.~O. Gamgam, ``Distribution of age of information in status update systems with heterogeneous information sources: {An} absorbing markov chain-based approach,'' \emph{{IEEE} Commun. Lett.}, vol.~27, no.~8, pp. 2024--2028, May 2023.

\bibitem{10038591}
D.~Fiems, ``Age of information analysis with preemptive packet management,'' \emph{{IEEE} Commun. Lett.}, vol.~27, no.~4, pp. 1105--1109, Feb. 2023.

\end{thebibliography}
\end{document}